\newcommand{\C}{\mathbb{C}}
\newcommand{\N}{\mathbb{N}}
\renewcommand{\d}{\mathrm{d}}
\newcommand{\bt}{\boldsymbol{t}}
\newcommand{\bb}{\boldsymbol{\beta}}
\newcommand{\bu}{\boldsymbol{u}}
\newcommand{\bq}{\boldsymbol{q}}
\newtheorem{teh}{Theorem}
\begin{document}

\title{ Hodograph solutions of the dispersionless coupled KdV hierarchies, critical points and the
Euler-Poisson-Darboux equation }

\author{B. Konopelchenko $^{1}$, L. Mart\'{\i}nez Alonso$^{2}$ and E. Medina$^{3}$
\\
\emph{ $^1$ Dipartimento di Fisica, Universit\'a di Lecce and Sezione INFN}
\\ {\emph 73100 Lecce, Italy}\\
\emph{$^2$ Departamento de F\'{\i}sica Te\'orica II, Universidad
Complutense}\\
\emph{E28040 Madrid, Spain}\\
\emph{$^3$ Departamento de Matem\'aticas, Universidad de C\'adiz}\\
\emph{E11510 Puerto Real, C\'adiz, Spain}
}

\maketitle \abstract{It is shown that the hodograph solutions of the
dispersionless coupled KdV (dcKdV) hierarchies describe critical and
degenerate critical points of a scalar function which obeys the
Euler-Poisson-Darboux equation. Singular sectors of each dcKdV
hierarchy are found to be described by solutions of higher genus
dcKdV hierarchies. Concrete solutions exhibiting shock type
singularities are presented. }

\vspace*{.5cm}

\begin{center}\begin{minipage}{12cm}
\emph{Key words:}  Integrable
systems. Hodograph equations. Euler-Poisson-Darboux equation.

\emph{PACS number:} 02.30.Ik.
\end{minipage}
\end{center}
\newpage

\section{Introduction}

In the present paper we study hierarchies of hydrodynamical systems
describing quasiclassical deformations of hyperelliptic curves
\cite{kod,bor}
\begin{equation}\label{cu}
p^2=u(\lambda),\quad u(\lambda):=\lambda^{m}-\sum_{i=0}^{m-1}\lambda^i\,u_i,\quad m\geq 1.
\end{equation}
These hierarchies are of interest for several reasons. First, there are hierarchies of important hydrodynamical type
 systems among them. For $m=1$ one has the Burgers-Hopf hierarchy \cite{whi,dubn}
 associated with the dispersionless  KdV equation $u_t=\dfrac{3}{2}\,u\,u_x$. For $m=2$ it is the hierarchy of higher equations
 for the 1-layer Benney
system (classsical long wave equation)
\begin{equation}\label{ben}
\begin{cases}
u_t+u\,u_x+v_x=0\\ \\
v_t+(u\,v)_x=0.
\end{cases}
\end{equation}
The system \eqref{ben} and the corresponding hierarchy  are
quasiclassical limits of the nonlinear Schr\"odinger (NLS) equation
and the NLS hierarchy \cite{zak}. For  $m\geq 3$ these hierarchies
turn to describe the singular sectors  of the above $m=1,2$
hierarchies \cite{kod}.

Second, all these hierarchies are the  dispersionless limits of integrable
coupled KdV (cKdV) hierarchies  \cite{lma}-\cite{for2} associated to
Schr\"odinger spectral problems
\begin{equation}\label{s1}
\partial_{xx}\,\psi=v(\lambda,x)\,\psi,\end{equation}
with potentials which are polynomials in the spectral parameter
$\lambda$
\[
\quad
v(\lambda,x):=\lambda^{m}-\sum_{i=0}^{m-1}\lambda^i\,v_i(x)\,\quad
m\geq 1,
\]
The cKdV hierarchies have been studied in \cite{lma}-\cite{for2}, they have bi-Hamiltonian structures and,
 as a consequence of this property, the dispersionless  expansions of their solutions  possess  interesting features
such as the quasi-triviality property \cite{dub1}-\cite{dub2}. Moreover, the cKdV hierarchies arise also in the study of the
singular sectors of the KdV and AKNS hierarchies \cite{man, borl}. Henceforth we  will refer to the  hierarchies of
 hydrodynamical systems  associated with the curves \eqref{cu} for a fixed $m$
 as the $m$-th dispersionless coupled KdV (dcKdV$_m$) hierarchies. The Hamiltonian structures of the dcKdV$_m$
 hierarchies have been studied in \cite{fer}. At last,  it should be noticed that the dcKdV$_m$ hierarchies
   are closely connected with the higher genus Whitham hierarchies introduced in \cite{kri}.

\vspace{0.3cm}

In our analysis of the hodograph equations for the dcKdV$_m$  hierarchies  we use Riemann invariants  $\beta_i$
 (roots  of the polynomial $u(\lambda)$ in \eqref{cu}) which provide a specially convenient system of coordinates.
We show that the dcKdV$_m$  hodograph equations have the form
\begin{equation}\label{crii}
\dfrac{\partial W_m(\bt,\bb)}{\partial\beta_i}=0,\quad i=1,\ldots,m,
\end{equation}
 where $\bt=(t_1,t_2,\ldots)$ are times of the hierarchy and
\begin{equation}\label{w}
W_m(\bt,\bb):=\oint_{\gamma}\dfrac{\d \lambda}{2\,i\,\pi}\,\dfrac{\sum_{n\geq 0}t_n\,\lambda^n}{\sqrt{\prod_{i=1}^{m}
(1-\beta_i/\lambda)}}.
\end{equation}
Here $\gamma$ denotes a large positively oriented circle  $|\lambda|=r$. Thus,
the hodograph solutions of the dcKdV$_m$  hierarchies describe critical points of the functions $W_m(\bt,\bb)$.
 These functions turn to be very special as they satisfy a well-known system of equations
  in differential geometry:  the Euler-Poisson-Darboux  (EPD) equations \cite{dar}
\begin{equation}\label{epd1i}
2\,(\beta_i-\beta_j)\, \dfrac{\partial^2
W_m}{\partial\beta_i\,\partial\beta_j}= \dfrac{\partial
W_m}{\partial\beta_i}-\dfrac{\partial W_m}{\partial\beta_j}.
\end{equation}
The system \eqref{epd1i}  has also appeared in the theory of the
 Whitham equations arising in the small dispersion limit  of the KdV equations  \cite{tia1}-\cite{gra},
  and in the theory of hydrodynamic chains \cite{pav}.

  We also  study  the singular sectors $\mathcal{M}_{m}^{\mbox{sing}}$  of the spaces of hodograph solutions for the dcKdV$_m$  hierarchies. They are  given by the points $(\bt,\bb)$ such that
 \begin{equation}\label{siei}
 \mbox{rank$\Big ( \dfrac{\partial^2 W_m(\bt,\bb)}{\partial \beta_i\, \partial \beta_j}\Big)<m$}.
 \end{equation}
The  varieties $\mathcal{M}_{m}^{\mbox{sing}}$ provide us with special classes of degenerate critical points of
  the function $W_m$ within the general theory of critical points developed by V. I. Arnold and others
   about fourty years ago
 \cite{arn1,arn2}.  The use of equations \eqref{crii}-\eqref{epd1i}  simplify
drastically the analysis of the structure of these
 singular sectors. In particular,  we prove that there is a nested
sequence of  subvarieties
\begin{equation}\label{fil0}
\mathcal{M}_{m}^{\mbox{sing}}\supset  \mathcal{M}_{m,1}^{\mbox{sing}} \supset \mathcal{M}_{m,2}^{\mbox{sing}}\supset \cdots\mathcal{M}_{m,q}^{\mbox{sing}}\supset \cdots,
\end{equation}
which  represents  subsets of the singular sector  $\mathcal{M}_{m}^{\mbox{sing}}$ of the dcKdV$_{m}$ hierarchy with increasing singular degree $q$, such that each
 $\mathcal{M}_{m,q}^{\mbox{sing}}$ is determined by a class of hodograph solutions of the dcKdV$_{m+2\,q}$ hierarchy.

 \vspace{0.3cm}

The paper is organized as follows. The dcKdV$_m$  hierarchies are described in Section 2. Equations (4)-(6) are derived
in Section 3. Section 4 deals with the analysis of the singular sectors of the dcKdV$_m$  hierarchies
in terms of their associated hodograph equations. The relation between singular points of the dcKdV$_m$ hodograph
 equations and solutions of higher dcKdV$_{m+2\,q}$ hodograph equations is stated in Section 4.
 Some concrete examples involving shock singularities of the Burgers-Hopf equation and the 1-layer Benney system are
 presented in Section 5.

\section{The dcKdV$_m$  hierarchies}

Given a positive integer $m\geq 1$ we consider the set $M_m$ of algebraic curves \eqref{cu}.
For  $m=2\,g+1$ (odd case) and $m=2\,g+2$ (even case) these curves are, generically, hyperelliptic  Riemann surfaces of genus $g$. We will denote  by $\bq=(q_1,\ldots, q_{m})$ any of the two sets of parameters $\bu:=(u_0,\ldots,u_{m-1})$ or $\bb:=(\beta_1,\ldots,\beta_{m})$ which determine  the curves \eqref{cu}
\begin{equation}\label{r11}
u(\lambda)=\lambda^{m}-\sum_{i=0}^{m-1}\lambda^i\,u_i=\prod_{i=1}^{m}(\lambda-\beta_i).
\end{equation}
 Obviously, for any fixed $\bb$ all the permutations $\sigma(\bb):=(\beta_{\sigma(1)},\ldots,\beta_{\sigma(m)})$ represent the same element of $M_m$. Note also that \begin{equation}\label{vi}
u_i=(-1)^{m-i-1}\mathrm{s}_{m-i}(\bb),
\end{equation}
where $\mathrm{s}_k$ are the elementary symmetric
polynomials
\[
\mathrm{s}_k=\sum_{1\leq i_1<\ldots<i_k\leq m}\beta_{i_1}\cdots \beta_{i_k}.
\]

We  next introduce the dcKdV$_m$ hierarchy  as a particular  systems of commuting flows
\[
\bq(\bt),\quad \bt:=(x:=t_0,t_1,t_2,\ldots),
\]
on $M_m$.
In order to define
these flows we use the set $\mathcal{L}$ of formal power series
\[
f(z)=\sum_{n=-\infty}^{+\infty}\,c_n\,z^n,
\]
where
\[
z:=\lambda^{1/2}\,\, \mbox{for $m=2\,g+1$};\quad
z:=\lambda\,\, \mbox{for $m=2\,g+2$}.
\]
For any given $m\geq 1$ a distinguished element of $\mathcal{L}$ is provided by the branch of $p=\sqrt{u(\lambda)}$ such that as    $z\rightarrow\infty$ has an expansion of the form
\begin{equation}\label{pes}
\begin{cases}
p(z,\bq)=z^{2\,g+1}\,\Big(1+\sum_{n\geq 1}\dfrac{b_n(\bq)}{z^{2\,n}}\Big)
,\quad m=2\,g+1,\\\\
p(z,\bq)=z^{g+1}\,\Big(1+\sum_{n\geq 1}\dfrac{b_n(\bq)}{z^{n}}\Big),
\quad m=2\,g+2.
\end{cases}
\end{equation}
We  define the following splittings $\mathcal{L}=\mathcal{L}_{(+,\,\bq)}\bigoplus\mathcal{L}_{(-,\,\bq)}$
\begin{equation}\label{pol}
 f_{(+,\,\bq)}(z):=\Big(\dfrac{f(z)}{p(z,\bq)}\Big)_{\oplus}\, p(z,\bq),
\quad f_{(-,\,\bq)}(z):=\Big(\dfrac{f(z)}{p(z,\bq)}\Big)_{\ominus}\, p(z,\bq),
\end{equation}
where $f_{\oplus}$ and $f_{\ominus}$ stand for the standard projections on positive and strictly negative powers of $z$, respectively
\[
f_{\oplus}(z):=\sum_{n=0}^N c_n\,z^n,\quad f_{\ominus}(z):=\sum_{n=-\infty}^{-1} c_n\,z^n.
\]

\vspace{0.3cm}

The  dcKdV$_m$  flows $\bq(\bt)$ are characterized by the following condition: There exists a family of functions
$S(z,\bt,\bq(\bt))$ in $\mathcal{L}$ satisfying
\begin{equation}\label{kdV}
\partial_{t_n}\, S(z,\bt,\bq(\bt))=\Omega_n(z,\bq(\bt)),
\quad n\geq 0.
\end{equation}
where
\begin{equation}\label{omegas}
\Omega_n(z,\bq):=(\lambda(z)^{n+m/2})_{(+,\,\bq)}=
\begin{cases}
(z^{2\,n+2\,g+1})_{(+,\,\bq)},\quad m=2\,g+1\\\\
(z^{n+g+1})_{(+,\,\bq)},\quad m=2\,g+2,
\end{cases}
\quad n\geq 0.
\end{equation}
We notice that
\begin{equation}\label{ere}
\Omega_n(z,\bq)=
\Big(\lambda^n\,R(\lambda(z),\bq)\Big)_{\oplus}\,p.
\end{equation}
where $R$ is the generating function
\begin{equation}\label{res}
R(\lambda,\bq):=\sqrt{\dfrac{\lambda^{m}}{u(\lambda)}}=\sum_{n\geq 0}
\dfrac{R_n(\bq)}{\lambda^n},\quad \lambda\rightarrow\infty.
\end{equation}
The coefficients $R_n(\bq)$ are polynomials in the coordinates $\bq$, for example
\[
R_0=1,\quad R_1=\dfrac{1}{2}\,u_{m-1},\quad R_2=\dfrac{1}{2}\,u_{m-2}
+\dfrac{3}{8}\,u_{m-1}^2,\quad \ldots \]
Functions $S$ which satisfy \eqref{kdV} will be referred to as \emph{action functions} of the dcKdV$_m$  hierarchy. This kind of
generating functions $S$ has been already used in the theory of dispersionless integrable systems (see e.g. \cite{kri}).
It can be proved \cite{kod} that \eqref{kdV} is a  compatible system of equations for $S$. In fact its general solution will
be determined in the next section.
We notice that for $n=0$ the equation \eqref{kdV} reads
\begin{equation}\label{pe}
\partial_x\, S(z,\bt,\bq(\bt))=p(z,\bq(\bt)),
\end{equation}
so that \eqref{kdV} is equivalent to the system
\begin{equation}\label{kdvp}
\partial_{t_n} p(z,\bq(t))=\partial_x\,\Omega_n(z,\bq(\bt)),\quad n\geq 0.
\end{equation}

We will henceforth refer to the  dcKdV$_m$ hierarchy for $m= 2\,g+1$ and $m=2\,g+2$   as the Burgers-Hopf  (BH$_g$) and the dispersionless Jaulent-Miodek (dJM$_g$) hierarchies, respectively. Observe that both hierarchies, BH$_g$ and dJM$_g$ determine deformations of hyperelliptic Riemann surfaces of genus $g$.  In our work we will always consider an arbitrary but finite number of these flows.

\vspace{0.3cm}

Since $u=u(\lambda(z),\bq)=p(z,\bq)^2$,  the operator $J=J(\lambda,u)$ defined by
\begin{align*}
J:&=2\,p\cdot\partial_x\cdot p=2\,u\,\partial_x+u_x,\\\\
J=&\sum_{i=0}^m\,\lambda^i\,J_i, \quad J_m=2\,\partial_x,\quad J_i=-(2\,u_i\,\partial_x+u_{i,x}),\quad u_m:=-1,
\end{align*}
satisfies $J\,R=0$. Then from \eqref{kdvp} it follows that
\begin{equation}\label{kdVu}
\partial_n\, u=
J\,\Big(\lambda^n\,R(\lambda,\bu)\Big)_{\oplus}=-J\,\Big(\lambda^n\,R(\lambda,\bu)\Big)_{\ominus},
\end{equation}
which constitutes the dcKdV$_m$  hierarchy in terms of  the coordinates $u_i$
\begin{equation}\label{kdVui}
\partial_n\, u_i=\sum_{l-k=i,\,k\geq 1}J_l\,R_{n+k}(\bu),\quad i=0,\ldots,m-1.
\end{equation}

From \eqref{kdvp} it also follows that
\[
\partial_{t_n}\,\log p(z,\bq)=
\dfrac{\partial_x\, \Big[\Big(\lambda(z)^n\,R(\lambda(z),\bq)\Big)_{\oplus}\,p\Big]}{p(z,\bq)},
\]
and then, identifying the residues of both sides at $\lambda=\beta_i$,
we get
\begin{equation}\label{bhb}
\partial_n\,\beta_i=\omega_{n,i}(\bb)\,\partial_x\,\beta_i,\quad i=1,\ldots,m,
\end{equation}
where
\begin{equation}\label{bhba}
\omega_{n,i}(\bb):=(\lambda^n\,
R(\lambda,\bb))_{\oplus}|_{\lambda=\beta_i}.
\end{equation}
The systems \eqref{bhb} are the equations of the dcKdV$_m$  hierarchy in terms of the coordinates $\beta_i$. Observe that we have two dcKdV$_m$  hierarchies, BH$_g$ and dJM$_g$, which determine deformations of hyperelliptic Riemann surfaces of genus $g$. It can be shown \cite{bor,fer}
that the dcKdV$_m$  flows  are bi-Hamiltonian systems.

\vspace{0.3 cm}

We next present some examples of interesting flows in the dcKdV$_m$ hierarchies.
The  dcKdV$_1$  hierarchy is
associated to the curve
\[
p^2-u(\lambda)=0,\quad u(\lambda)=\lambda-v,\quad v:=u_0=\beta_1.
\]
The corresponding  flows are given by
\[
\partial_{t_n}\,v=c_n\,v^n\, v_x,\quad c_n:=\dfrac{(2\,n+1)!!}{2^n\,n!},\quad n\geq 1,
\]
and constitute the Burgers-Hopf hierarchy BH$_0$.
In particular the $t_1$-flow is the Burgers-Hopf equation
\[
\partial_t\,v=\dfrac{3}{2}\,v\,v_x,
\]
which is in turn the dispersionless limit of the KdV equation.

\vspace{0.3cm}

The dcKdV$_2$ (dJM$_0$)  hierarchy is associated to the curve
\[
p^2-u(\lambda)=0,\quad u(\lambda)=\lambda^2-\lambda\, u_1-u_0=(\lambda-\beta_1)\,(\lambda-\beta_2),
\]
\[
u_1=\beta_1+\beta_2,\quad u_0=-\beta_1\,\beta_2.
\]
The $t_1$-flow of this hierarchy is given by the  disperssionless Jaulent-Miodek system
\begin{equation}\label{jm}
\begin{cases}
\partial_{t_1}\,u_0=u_0\,u_{1\,x}+\dfrac{1}{2}\,u_1\,u_{0\,x},\\ \\
\partial_{t_1}\,u_1=u_{0\,x}+\dfrac{3}{2}\,u_1\,u_{1\,x},
\end{cases}
\end{equation}
which under the changes of dependent variables
\[
u=-u_1,\quad
v=u_0+\dfrac{u_1^2}{4},
\]
becomes the 1-layer Benney system \eqref{ben}. In terms of the Riemann invariants $\beta_1$ and $\beta_2$
\[
u=-(\beta_1+\beta_2),\quad v=(\beta_1-\beta_2)^2/4,
\]
 the system \eqref{ben} takes the well-known form
\begin{equation}\label{benr}
\begin{cases}
\partial_{t_1}\,\beta_1=\dfrac{1}{2}\,(3\,\beta_1+\beta_2)\,\beta_{1\,x},\\ \\
\partial_{t_1}\,\beta_2=\dfrac{1}{2}\,(3\,\beta_2+\beta_1)\,\beta_{2\,x}.
\end{cases}
\end{equation}
For $v>0$ the 1-layer Benney system is hyperbolic while for $v<0$ it is elliptic.

\vspace{0.3cm}

Finally, we consider  the BH$_1$ hierarchy. Its associated curve  is given by
$$\everymath{\displaystyle}\begin{array}{l}
p^2\,-\,u(\lambda)\,=\,0,\quad
u(\lambda)\,=\,\lambda^3\,-\,\lambda^2\,u_2\,-\,\lambda\,u_1\,-\,u_0\,=\,(\lambda\,-\,\beta_1)
\,(\lambda\,-\,\beta_2)\,(\lambda\,-\,\beta_3),\\  \\
u_1\,=\,\beta_1\,+\,\beta_2\,+\,\beta_3,\quad
u_2\,=\,-\,(\beta_1\,\beta_2\,+\,\beta_1\,\beta_3\,+\,\beta_2\,\beta_3),\quad
u_3\,=\,\beta_1\,\beta_2\,\beta_3.
\end{array}$$
The first flow takes the forms
\begin{equation}\label{f3}
\everymath{\displaystyle}\left\{\begin{array}{l}
\partial_{t_1}u_0\,=\,\frac{1}{2}\,u_2\,u_{0\,x}\,+\,u_0\,u_{2\,x},\\  \\
\partial_{t_1}u_1\,=\,u_{0\,x}\,+\,\frac{1}{2}\,u_2\,u_{1\,x}\,+\,u_1\,u_{2\,x},\\  \\
\partial_{t_1}u_2\,=\,u_{1\,x}\,+\,\frac{3}{2}\,u_2\,u_{2\,x}.
\end{array}\right.
\Longleftrightarrow \quad
\left\{\begin{array}{l}
\partial_{t_1}\beta_1\,=\,\frac{1}{2}\,(3\,\beta_1\,+\,\beta_2\,+\,\beta_3)\,\beta_{1\,x},\\  \\
\partial_{t_1}\beta_2\,=\,\frac{1}{2}\,(\beta_1\,+\,3\,\beta_2\,+\,\beta_3)\,\beta_{2\,x},\\  \\
\partial_{t_1}\beta_3\,=\,\frac{1}{2}\,(\beta_1\,+\,\beta_2\,+\,3\,\beta_3)\,\beta_{3\,x}.
\end{array}\right.
\end{equation}

\section{Hodograph equations for  dcKdV$_m$  hierarchies and the Euler-Poisson-Darboux equation}

 Let us introduce the function
 \begin{equation}\label{efe}
W_m(\bt,\bq):=\oint_{\gamma}\dfrac{\d \lambda}{2\,i\,\pi}\,U(\lambda,\bt)\,
R(\lambda,\bq)=\sum_{n\geq0}\,t_n\,R_{n+1}(\bq),
\end{equation}
where $\gamma$ denotes a large positively oriented circle  $|\lambda|=r$ ,
$U(\lambda,\bt):=\sum_{n\geq 0}t_n\,\lambda^n$ and $R(\lambda,\bq)$ is the function defined in \eqref{res}.
\begin{teh}
If the functions $\bq(\bt)=(q_1(\bt,\ldots,q_m(\bt))$  satisfy the system of hodograph equations
\begin{equation}\label{hod}
\dfrac {\partial W_m(\bt,\bq)}{\partial q_i}=0,\quad i=1,\ldots,m,
\end{equation}
then $\bq(\bt)$ is a solution of the dcKdV$_m$  hierarchy.\end{teh}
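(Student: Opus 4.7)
My plan is to work in Riemann invariant coordinates by taking $\bq=\bb$ and to deduce the characteristic equations \eqref{bhb} directly from the hodograph equations via implicit differentiation. The whole argument rests on two contour-integral identities for $W_m$, after which the vanishing of the off-diagonal Hessian on the hodograph locus makes the result a one-line computation. The case $\bq=\bu$ is then recovered by the chain rule, since \eqref{vi} identifies the two coordinate systems away from the diagonal $\beta_i=\beta_j$.

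First I would compute the first and second partials of $W_m$ as contour integrals. From $R(\lambda,\bb)=\prod_{j=1}^m(1-\beta_j/\lambda)^{-1/2}$ one reads off $\partial_{\beta_i}R=R/(2(\lambda-\beta_i))$ and hence
\[
\frac{\partial W_m}{\partial \beta_i}=\oint_\gamma\frac{\d\lambda}{4\,i\,\pi}\,\frac{U(\lambda,\bt)\,R(\lambda,\bb)}{\lambda-\beta_i}.
\]
Splitting $\lambda^n R=(\lambda^n R)_\oplus+(\lambda^n R)_\ominus$, the polynomial $(\lambda^n R)_\oplus$ has a single simple pole at $\beta_i$ inside $\gamma$ with residue $\omega_{n,i}(\bb)$, while $(\lambda^n R)_\ominus/(\lambda-\beta_i)$ decays as $O(1/\lambda^2)$ at infinity and contributes nothing. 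So \eqref{hod} is equivalent to $\sum_n t_n\,\omega_{n,i}(\bb)=0$, $i=1,\ldots,m$, and the same contour computation (with $U$ replaced by $\lambda^n$) gives the auxiliary identity $\partial_{\beta_i}R_{n+1}=\tfrac12\,\omega_{n,i}(\bb)$; in particular $\partial_{\beta_i}R_1=\tfrac12$ since $\omega_{0,i}=1$.

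Next I would establish the Euler--Poisson--Darboux identity. A second differentiation under the integral gives
\[
H_{ij}:=\frac{\partial^2 W_m}{\partial\beta_i\,\partial\beta_j}=\oint_\gamma\frac{\d\lambda}{2\,i\,\pi}\,\frac{U\,R}{4(\lambda-\beta_i)(\lambda-\beta_j)},\qquad i\neq j,
\]
and the partial-fraction identity $[(\lambda-\beta_i)(\lambda-\beta_j)]^{-1}=[(\lambda-\beta_i)^{-1}-(\lambda-\beta_j)^{-1}]/(\beta_i-\beta_j)$ converts this into $H_{ij}=(\partial_{\beta_i}W_m-\partial_{\beta_j}W_m)/(2(\beta_i-\beta_j))$, which is \eqref{epd1i}. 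On the hodograph locus both terms in the numerator vanish, so $H_{ij}=0$ for $i\neq j$ and the Hessian is \emph{diagonal} there.

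Finally, differentiate $\partial_{\beta_i}W_m(\bt,\bb(\bt))=0$ implicitly with respect to $t_n$ and to $x=t_0$, using $\partial_{t_n}W_m=R_{n+1}(\bb)$: one gets the linear systems $\sum_j H_{ij}\,\partial_{t_n}\beta_j=-\partial_{\beta_i}R_{n+1}$ and $\sum_j H_{ij}\,\partial_x\beta_j=-\partial_{\beta_i}R_1$. Diagonality decouples them, and dividing,
\[
\frac{\partial_{t_n}\beta_i}{\partial_x\beta_i}=\frac{\partial_{\beta_i}R_{n+1}}{\partial_{\beta_i}R_1}=\omega_{n,i}(\bb),
\]
which is precisely \eqref{bhb}. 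The main obstacle is that the implicit step requires $H_{ii}\neq 0$, so the argument is valid on the smooth part of the hodograph variety; the degenerate locus is exactly the singular sector $\mathcal{M}_m^{\mathrm{sing}}$ analyzed in Section~4 and must be handled separately.
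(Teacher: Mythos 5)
Your argument is correct where it applies, but it takes a genuinely different route from the paper's. The paper proves the theorem by exhibiting the action function $S=\sum_n t_n\,\Omega_n$ of \eqref{sol} and verifying directly that $\partial_{t_n}S=\Omega_n$: the extra terms cancel because the hodograph equations say precisely that $(UR)_{\oplus}$ vanishes at each $\beta_i$ (Eq.~\eqref{hod1}), so the hierarchy is obtained in the form \eqref{kdV}--\eqref{kdvp} without inverting anything. You instead run the classical generalized-hodograph argument: the contour identities give $\partial_{\beta_i}W_m=\tfrac12\,(UR)_{\oplus}|_{\lambda=\beta_i}$ and $\partial_{\beta_i}R_{n+1}=\tfrac12\,\omega_{n,i}$ (both correct; the first reproduces \eqref{hoduesss}), the EPD identity forces the Hessian to be diagonal on the hodograph locus (this is \eqref{cros}), and implicit differentiation then yields \eqref{bhb}. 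What your route buys is that Theorem~2 and the diagonality \eqref{cros} come out as byproducts rather than as separate observations. What it costs is scope: you need $\partial^2_{\beta_i}W_m\neq 0$ (which you flag) \emph{and} $\beta_i\neq\beta_j$ for all $i\neq j$ (which you flag only for the $\bu\leftrightarrow\bb$ change of coordinates, but which is equally essential for the partial-fraction step that kills the off-diagonal Hessian entries). So your proof covers only the regular, unreduced sector $\mathcal{M}_m^{\mbox{reg}}$, whereas the paper's computation holds for any differentiable $\bq(\bt)$ satisfying \eqref{hod}, including reduced solutions, which matter in Sections 4 and 5. One small wording slip: it is $(\lambda^nR)_{\oplus}/(\lambda-\beta_i)$, not the polynomial $(\lambda^nR)_{\oplus}$ itself, that has the simple pole at $\beta_i$; and since $R$ has branch points inside $\gamma$, the integral must be evaluated (as you in fact do) by expanding at $\lambda=\infty$ rather than by residues at $\beta_i$.
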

\begin{proof}
We are going to prove that the function
\begin{equation}\label{sol}
S(z,\bt,\bq(\bt))=\sum_{n\geq 0}\, t_n\,\Omega_n(z,\bq(\bt))=\Big(U(\lambda(z),\bt)\,
R(\lambda(z),\bq(\bt))\Big)_{\oplus}\,p(z,\bq(\bt)),
\end{equation}
is an action function for  the dcKdV$_m$  hierarchy.
 By differentiating \eqref{sol} with respect to $t_n$ we have that
\begin{equation}\label{part}
\partial_n\,S=\Omega_n+(U\,\partial_n\,R)_{\oplus}\,p+
(U\,R)_{\oplus}\,\partial_n\,p,
\end{equation}
We now use the coordinates $\bb=(\beta_1,\ldots,\beta_m)$ so that we may take advantage of the identities
\begin{equation}\label{idd}
\partial_{\beta_i}\,p=-\dfrac{1}{2}\,\dfrac{p}{\lambda-\beta_i},
\quad
\partial_{\beta_i}\,R=\dfrac{1}{2}\,\dfrac{R}{\lambda-\beta_i}.
\end{equation}
Thus we deduce that
\begin{equation}\label{t1}
(U\,\partial_n\,R)_{\oplus}\,p+
(U\,R)_{\oplus}\,\partial_n\,p=\dfrac{1}{2}\,\sum_{i=1}^m\,\Big[\Big(\dfrac{U\,R}{\lambda-\beta_i}\Big)_{\oplus}
-\dfrac{(U\,R)_{\oplus}}{\lambda-\beta_i}\Big]\,p\,\partial_n\,\beta_i .
\end{equation}
On the other hand
\begin{equation}\label{simm}
\dfrac {\partial W_m(\bt,\bb)}{\partial \beta_i}=\dfrac{1}{2}\,\oint_{\gamma}\dfrac{\d \lambda}{2\,i\,\pi}\,\dfrac{U(\lambda,\bt)\,
R(\lambda,\bb)}{\lambda-\beta_i}=\dfrac{1}{2}\,\oint_{\gamma}\dfrac{\d \lambda}{2\,i\,\pi}\,\dfrac{(U(\lambda,\bt)\,
R(\lambda,\bb))_{\oplus}}{\lambda-\beta_i}.
\end{equation}
Hence the hodograph equations \eqref{hod} can be written as
\begin{equation}\label{hod1}
(U(\lambda,\bt)\,
R(\lambda,\bb(\bt)))_{\oplus}|_{\lambda=\beta_i}=0,\quad i=1,\ldots,m.
\end{equation}
Thus we have that $(U(\lambda,\bt)\,
R(\lambda,\bb(\bt))_{\oplus}$ is a polynomial in $\lambda$ which vanish at $\lambda=\beta_i(\bt)$ for all $i$. As a consequence
\[
\dfrac{(U\,R)_{\oplus}}{\lambda-\beta_i}=\Big(\dfrac{(U\,R)_{\oplus}}{\lambda-\beta_i}\Big)_{\oplus}=\Big(\dfrac{U\,R}{\lambda-\beta_i}\Big)_{\oplus}.
\]
Then from \eqref{part} and \eqref{t1} we deduce that $\partial_n\,S=\Omega_n$ and therefore the statement follows.

\end{proof}

Using \eqref{efe} we obtain that
 the hodograph equations \eqref{hod} can be expressed as
\begin{equation}\label{hodues}
\sum_{n\geq0}\,t_n\,\dfrac{\partial R_{n+1}(\bq)}{\partial q_i}=0,\quad i=1,\ldots,m.
\end{equation}
Furthermore, from \eqref{bhb}, \eqref{bhba} and \eqref{hod1} the
hodograph equations \eqref{hod} can be also written as \cite{kod}
\begin{equation}\label{hoduesss}
\sum_{n\geq0}\,t_n\,\omega_{n,i}(\bb)=0,\quad i=1,\ldots,m,
\end{equation}
which represent the hodograph transform for the dcKdV$_m$ hierarchy
of flows in hydrodynamic form.

  Notice also that we may shift the time parameters
$t_n\rightarrow t_n-c_n$ in \eqref{hodues} to get solutions
depending on an arbitrary number of constants.

\vspace{0.3cm}

It is easy to see that the generating function
\[
R(\lambda,\bb):=\sqrt{\dfrac{\lambda^{m}}{u(\lambda)}}=\sqrt{\dfrac{\lambda^{m}}{\prod_{i=1}^{m}(\lambda-\beta_i)}},
\]
is a symmetric solution of the EPD equation
\begin{equation}\label{epd}
2\,(\beta_i-\beta_j)\, \dfrac{\partial^2 R}{\partial\beta_i\,\partial\beta_j}=
\dfrac{\partial R}{\partial\beta_i}-\dfrac{\partial R}{\partial\beta_j}.
\end{equation}
Consequently, the same property is satisfied by $W(\bt,\bb)$ for all $\bt$. Thus, we have proved

\begin{teh}
The solutions $(\bt,\bb)$ of the hodograph equations
\begin{equation}\label{s0}
\dfrac{\partial W_m(\bt,\bb)}{\partial\beta_i}=0,\quad i=1,\ldots,m,
\end{equation}
are the critical points
 of the solution
\[
W_m(\bt,\bb):=\oint_{\gamma}\dfrac{\d \lambda}{2\,i\,\pi}\,\dfrac{U(\lambda,\bt)}{\sqrt{\prod_{i=1}^{m}
(1-\beta_i/\lambda)}}
\]
 of the  EPD equation
\begin{equation}\label{epd}
2\,(\beta_i-\beta_j)\, \dfrac{\partial^2
W_m}{\partial\beta_i\,\partial\beta_j}= \dfrac{\partial
W_m}{\partial\beta_i}-\dfrac{\partial W_m}{\partial\beta_j}.
\end{equation}
\end{teh}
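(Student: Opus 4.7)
The statement splits into two assertions: that the hodograph points are critical points of $W_m$, and that $W_m$ solves the EPD system \eqref{epd}. The first assertion requires no work, since the hodograph equations \eqref{s0} are literally the critical-point equations for $W_m(\bt,\cdot)$. So the plan is to concentrate entirely on verifying the EPD equation for $W_m$.

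My strategy is to reduce the EPD property of $W_m$ to the same property of the integrand $R(\lambda,\bb)$. Concretely, because $U(\lambda,\bt)$ is independent of $\bb$ and the contour $\gamma=\{|\lambda|=r\}$ can be fixed once $r$ exceeds all $|\beta_i|$, one can differentiate under the integral sign. Thus it suffices to show that, for each fixed $\lambda$ outside the roots,
\[
2\,(\beta_i-\beta_j)\,\dfrac{\partial^2 R}{\partial\beta_i\,\partial\beta_j}=\dfrac{\partial R}{\partial\beta_i}-\dfrac{\partial R}{\partial\beta_j},
\]
and then integrate this identity against $U(\lambda,\bt)/(2\,i\,\pi)$ over $\gamma$ to obtain the corresponding identity for $W_m$.

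For the pointwise EPD identity on $R$, I will exploit the first-order identity already recorded in \eqref{idd}, namely $\partial_{\beta_i}R=\tfrac{1}{2}R/(\lambda-\beta_i)$. Applying $\partial_{\beta_j}$ to this yields the mixed derivative $\partial_{\beta_i}\partial_{\beta_j}R=R/[4(\lambda-\beta_i)(\lambda-\beta_j)]$. Multiplying by $2(\beta_i-\beta_j)$ and using the trivial partial-fraction identity
\[
\dfrac{\beta_i-\beta_j}{(\lambda-\beta_i)(\lambda-\beta_j)}=\dfrac{1}{\lambda-\beta_i}-\dfrac{1}{\lambda-\beta_j}
\]
collapses the left-hand side precisely to $\partial_{\beta_i}R-\partial_{\beta_j}R$.

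There is no real obstacle here; the only mild point to handle carefully is to make sure the contour $\gamma$ does not need to be deformed when differentiating in $\bb$. This is fine because one may choose $r$ large enough at the outset and work locally in a neighborhood of $\bb$ where all roots stay inside $|\lambda|=r$; the integrand $R(\lambda,\bb)$ is holomorphic in $\bb$ outside the ramification locus, so dominated convergence justifies exchanging $\partial_{\beta_i}$ with $\oint_\gamma$. Once this is secured, the EPD equation for $W_m$ is immediate from the EPD equation for $R$.
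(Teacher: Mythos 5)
Your proposal is correct and follows essentially the same route as the paper: the paper likewise observes that $R(\lambda,\bb)$ satisfies the EPD equation (which you verify explicitly via the identity $\partial_{\beta_i}R=\tfrac{1}{2}R/(\lambda-\beta_i)$ and partial fractions) and then transfers the property to $W_m$ by linearity of the contour integral, the critical-point assertion being immediate from the definition of the hodograph equations. Your write-up merely supplies the details the paper leaves as ``easy to see.''
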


\vspace{0.3cm}
Let us denote by
$\mathcal{M}_m$  the \emph{variety}  of points $(\bt,\bb)\in \C^{\infty}\times \C^m$  which satisfy the hodograph equations \eqref{s0}.
From \eqref{simm} it is clear
that for any permutation $\sigma$ of $\{1,\ldots,m\}$ the functions
\begin{equation}\label{efee}
 F_i(\bt,\bb):=\dfrac {\partial W_m(\bt,\bb)}{\partial \beta_i},
\end{equation}
satisfy
\begin{equation}\label{h1aa}
F_i(\bt,\sigma(\bb))=F_{\sigma(i)}(\bt,\bb).
\end{equation}
Then, it is clear
 that $\mathcal{M}_m$ is invariant under the action of the group of permutations
\[
(\bt,\bb)\in \mathcal{M}_m\Longrightarrow (\bt,\sigma(\bb))\in \mathcal{M}_m .
\]

\vspace{0.3cm}

 If $(\bt,\bb)$ is a solution of  \eqref{s0} such that $\beta_i\neq \beta_j$ for all $i\neq j$ then it will be called an \emph{unreduced} solution of  \eqref{s0}. In this case  the EPD equation \eqref{epd} implies that
 \begin{equation}\label{cros}
 \dfrac{\partial^2
W_m(\bt,\bb)}{\partial\beta_i\,\partial\beta_j}=0,\quad \forall i\neq j.
 \end{equation}
Given  $2\leq r\leq m$, a solution $(\bt,\bb)$ of \eqref{s0} such that exactly $r$ of its components are equal will be called a
$r$-\emph{reduced solution} of \eqref{s0}.

The formulation  \eqref{hod} of the hodograph equations for the dcKdV$_m$ hierarchies allows us
to apply the theory of critical points of functions to analyze the solutions of these hierarchies, while
 \eqref{epd} indicates that the functions $W_m$ are of a very special class.

The EPD equation \eqref{epd} arose in the study of cyclids \cite{dar},  where solutions $W$
of the above form have been found too. Much later it appeared in the theory of Whitham equations describing the
small dispersion limit of the KdV equation \cite{tia1,gra}.

We note that hodograph equations of a form close to \eqref{hod} have been presented in \cite{pav} and \cite{dubk}.
 Furthermore, linear equations of the EPD type  and their connection with hydrodynamic chains
 have been studied in \cite{pav1} too.

Finally, we emphasize that the functions $W_m$ depend on the parameters $t_1,t_2,\ldots$ (times of the hierarchy).
Since "degenerate critical points appear naturally in cases when the functions depend on parameters " \cite{arn1,arn2}, one should expect the existence of families of degenerate critical points for the functions $W_m$. Their connection with the singular sectors in the spaces of solutions for dcKdV$_m$ will be considered in the next section.

\vspace{0.3cm}

To illustrate the statements given above we next present some simple examples. For the dcKdV$_2$ hierarchy we have
\begin{align*}
 W_2(\bt,\bb)&=\dfrac{x}{2}(\beta_1+\beta_2)+\dfrac{t_1}{8}(3\beta_1^2+2\beta_1\beta_2+3\beta_2^2)+\dfrac{t_2}{16}
 \left(5\beta_1^3+
  3\beta_1^2\beta_2+3\beta_1\beta_2^2+5\beta_2^3\right)\\
  &+
  \dfrac{t_3}{128}(35\beta_1^4+20\beta_1^3\beta_2+18\beta_1^2\beta_2^2+20\beta_1\beta_2^3+35\beta_2^4)+\cdots
\end{align*}
The hodograph equations with $t_n\,=\,0$ for  $n\,\geq\,4$,  take the
form
\begin{equation}\label{h2}
\everymath{\displaystyle}
\begin{cases}
8x+4t_1(3\beta_1+\beta_2)+3t_2\left(5\beta_1^2+
  2\beta_1\beta_2+\beta_2^2\right)+
  \dfrac{t_3}{8}(140\beta_1^3+60\beta_1^2\beta_2+36\beta_1\beta_2^2+20\beta_2^3)=0,\\\\
8x+4t_1(\beta_1+3\beta_2)+3t_2\left(\beta_1^2+
  2\beta_1\beta_2+5\beta_2^2\right)
  +\dfrac{t_3}{8}(140\beta_2^3+60\beta_2^2\beta_1+36\beta_2\beta_1^2+20\beta_1^3)=0.
\end{cases}
\end{equation}

\vspace{0.3cm}

 For the dcKdV$_3$ hierarchy we have
\begin{align*}\everymath{\displaystyle}
W_3(\bt,\bb)&=\,\frac{x}{2}(\beta_1+\beta_2+\beta_3)+\frac{t_1}{8}\left(3 \beta_1^2+3 \beta_2^2+3 \beta_3^2+2\beta_1\beta_2+2\beta_1\beta_3
   +2 \beta_2\beta_3\right)\\  \\
   &+\frac{t_2}{16} \Big(5 \beta_1^3++5 \beta_2^3+5 \beta_3^3+3\beta_1^2\beta_2+3\beta_1^2\beta_3+3\beta_1\beta_2^2+3\beta_2^2\beta_3+
                          3\beta_1\beta_3^2\\  \\
   &\quad +3\beta_2\beta_3^2+2\beta_1\beta_2 \beta_2\Big)+\cdots
\end{align*}
The hodograph equations with $t_n\,=\,0$ for $n\,\geq\,3$  are
\begin{equation}\label{h3}
\everymath{\displaystyle}\begin{cases}
8\,x+4\,t_1\,(3\,\beta_1+\beta_2+\beta_3)+t_2\,(15\,\beta_1^2+3\,\beta_2^2+3\,\beta_3^2
+6\,\beta_1\,\beta_2+6\,\beta_1\,\beta_3+2\,\beta_2\,\beta_3)=0,\\  \\
8\,x+4\,t_1\,(\beta_1+3\,\beta_2+\beta_3)+t_2\,(3\,\beta_1^2+15\,\beta_2^2+3\,\beta_3^2
+6\,\beta_1\,\beta_2+2\,\beta_1\,\beta_3+6\,\beta_2\,\beta_3)=0,\\  \\
8\,x+4\,t_1\,(\beta_1+\beta_2+3\,\beta_3)+t_2\,(3\,\beta_1^2+3\,\beta_2^2+15\,\beta_3^2
+2\,\beta_1\,\beta_2+6\,\beta_1\,\beta_3+6\,\beta_2\,\beta_3)\,=\,0.
\end{cases}
\end{equation}

\section{Singular sectors of dcKdV$_m$  hierarchies}

We say that $(\bt,\bb)\in \mathcal{M}_m$ is a regular point if it is a nondegenerate critical point of the function
$W_m$. That it is to say, if it satisfies \cite{arn1,arn2}
\begin{equation}\label{reg}
 \mbox{det} \Big(\dfrac{\partial^2 W_m(\bt,\bb)}{\partial \beta_i\, \partial \beta_j}\Big)\neq 0.
 \end{equation}
The set of regular points of $\mathcal{M}_m$ will be denoted by $\mathcal{M}_m^{\mbox{reg}}$ and the points of its
complementary set $\mathcal{M}_m^{\mbox{sing}}:=\mathcal{M}_m-\mathcal{M}_m^{\mbox{reg}}$, where the second differential of $W_m$
is a degenerate quadratic form, will be called singular points. We will also refer to $\mathcal{M}_m^{\mbox{reg}}$
and $\mathcal{M}_m^{\mbox{sing}}$ as the regular and singular sectors of the  dcKdV$_m$  hierarchy. So $\mathcal{M}_m^{\mbox{sing}}$
describes families of degenerate critical points of the function $W_m$. Near a  regular point the variety $\mathcal{M}_m^{\mbox{reg}}$ can be uniquely described as $(\bt,\bb(\bt))$ where $\bb(\bt)$ is a solution of the dcKdV$_m$  hierarchy.

The aim of this section is to analyze the structure of $\mathcal{M}_m^{\mbox{sing}}$ by taking advantage of the special properties of the set of coordinates $\bb$.

 In general, the singular sectors of dcKdV$_m$ hierarchies with $m\geq 2$ contain both reduced and unreduced points.  For example, the hodograph equations \eqref{h2} of the dcKdV$_2$ hierarchy have reduced singular points
given by $(x,t_1,t_2,t_3,\beta_1=\beta_2)$  where
\[
72xt_3^2=-9t_2^2+36t_1t_2t_3+(8t_1t_3-3t_2^2)\sqrt{9t_2^2-24t_1t_3},
\]
and
\[
\beta_1=\beta_2=-\dfrac{3t_2+\sqrt{9t_2^2-24t_1t_3}}{12t_3}.
\]
Furthermore, there are also unreduced singular points $(x,t_1,t_2,t_3,\beta_1,\beta_2)$ determined by the constraint
$$
360xt_3^3=-45 t_3 t_2^3+180 t_1
   t_3^2 t_2+\sqrt{15}\,(8t_1 t_3-3t_2^2)\,\sqrt{t_3^2
   \left(3 t_2^2-8 t_1 t_3\right)},$$
and
$$\beta_1\,=\,\frac{-3 t_2 t_3+\sqrt{15} \sqrt{t_3^2 \left(3 t_2^2-8 t_1
   t_3\right)}}{12 t_3^2},\quad
   \beta_2\,=\,-\frac{5 t_2 t_3+\sqrt{15} \sqrt{t_3^2 \left(3
   t_2^2-8 t_1 t_3\right)}}{20 t_3^2}$$

\vspace{0.3cm}

From \eqref{cros} it follows at once that
\begin{teh}
Let $(\bt,\bb)$ be an unreduced  solution of the hodograph equations \eqref{s0}, then $(\bt,\bb)$ is a singular point
if and only if at least one of the derivatives
\[
\dfrac{\partial^2\,W_m(\bt,\bb)}{\partial\,\beta_i ^2},\quad i=1,\ldots,m,
\]
 vanishes.
\end{teh}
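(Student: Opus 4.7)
The plan is to exploit equation \eqref{cros}, which was already established as a direct consequence of the EPD equation \eqref{epd} combined with the critical-point conditions $\partial W_m/\partial\beta_i=0$. At an unreduced solution the factor $(\beta_i-\beta_j)$ in \eqref{epd} is nonzero for $i\ne j$, so dividing through yields
\[
\dfrac{\partial^2 W_m(\bt,\bb)}{\partial\beta_i\,\partial\beta_j}=0\quad\text{for all } i\ne j.
\]
Thus the whole proof reduces to a linear-algebra observation about the Hessian of $W_m$ at such a point.

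First I would restate \eqref{cros} at the given unreduced solution $(\bt,\bb)$ to conclude that the Hessian matrix $H(\bt,\bb):=\bigl(\partial^2 W_m/\partial\beta_i\partial\beta_j\bigr)$ is diagonal. Then I would invoke the standing definition of a singular point from \eqref{reg} and \eqref{siei}, namely that $(\bt,\bb)\in\M_m^{\text{sing}}$ iff $\det H(\bt,\bb)=0$ (equivalently, rank less than $m$). Since for a diagonal matrix
\[
\det H(\bt,\bb)=\prod_{i=1}^{m}\dfrac{\partial^2 W_m(\bt,\bb)}{\partial\beta_i^2},
\]
the determinant vanishes if and only if at least one diagonal entry vanishes. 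This gives both directions of the equivalence at once.

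Neither direction should pose a genuine obstacle, since all the analytic content is packaged in \eqref{cros}; the only thing to check carefully is that the hypothesis ``unreduced'' is used exactly at the step where one divides by $(\beta_i-\beta_j)$ in the EPD equation, so that the off-diagonal entries are forced to vanish even when the right-hand side $\partial W_m/\partial\beta_i-\partial W_m/\partial\beta_j$ is already zero by the hodograph equations \eqref{s0}. I would close by remarking that, in the reduced case, this argument breaks down precisely because $(\beta_i-\beta_j)=0$ no longer allows one to solve for the mixed partial, which is consistent with the expectation that reduced singular points (and the subsequent nested filtration \eqref{fil0}) require a finer analysis carried out elsewhere in the paper.
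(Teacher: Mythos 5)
Your argument is correct and is exactly the paper's own reasoning: the authors derive \eqref{cros} from the EPD equation at an unreduced critical point, making the Hessian diagonal, and then state that the theorem ``follows at once'' since the determinant condition \eqref{reg} reduces to the vanishing of a diagonal entry. Your only addition is to spell out the determinant-of-a-diagonal-matrix step and the role of the unreduced hypothesis, which the paper leaves implicit.
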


Notice that since the function $W_m$ satisfies the EPD equation \eqref{epd} ,  its partial derivatives  at unreduced points $(\bt,\bb)$
\[
\dfrac{\partial^{q} W_m(\bt,\bb)}{\partial \beta_1^{q_1}\cdots\partial\beta_m^{q_m} },\quad q:=q_1+\cdots+q_m,
\]
can always be  expressed as a linear combination of  \emph{diagonal} derivatives $\partial_{\beta_i}^{k_i}W_m$ with $k_i\leq q_i$.
Thus, for each vector  $\bq=(q_1,\ldots,q_m)\in \N^m$ with at least one $q_i\geq 1$ it is natural to introduce an associated subvariety
 $\mathcal{M}_{m,\bq}^{\mbox{sing}}$ of $\mathcal{M}_{m}^{\mbox{sing}}$  defined as the set of unreduced
solutions  $(\bt,\bb)$ of the hodograph equations \eqref{s0}  such that
\begin{equation}\label{classso}
\dfrac{\partial^{k_i}\,W_m(\bt,\bb)}{\partial\,\beta_i ^{k_i}}=0,\quad \forall  k_i\leq q_i+1.
\end{equation}

In particular, for $\bq=(0,\ldots,0,q)$ with $q\geq 1$ we denote by $\mathcal{M}_{m,q}^{\mbox{sing}}$  the subvariety  associated to
$\bq=(0,\ldots,0,q)$. That is to say,  $\mathcal{M}_{m,q}^{\mbox{sing}}$  is the set of solutions  $(\bt,\bb)$ of the hodograph equations \eqref{s0}  such that
\begin{equation}\label{classs}
\dfrac{\partial^2\,W_m(\bt,\bb)}{\partial\,\beta_m ^2}=\dfrac{\partial^3\,W_m(\bt,\bb)}{\partial\,\beta_m ^3}=\ldots=
\dfrac{\partial^{q+1}\,W_m(\bt,\bb)}{\partial\,\beta_m ^{q+1}}=0.
\end{equation}
These  subvarieties    define   a nested sequence
\begin{equation}\label{fil}
\mathcal{M}_{m}^{\mbox{sing}}\supset  \mathcal{M}_{m,1}^{\mbox{sing}} \supset \mathcal{M}_{m,2}^{\mbox{sing}}\supset \cdots\mathcal{M}_{m,q}^{\mbox{sing}}\supset \cdots,
\end{equation}
and represent sets of points whose  singular degree increases with $q$.
Moreover, due to the covariance of  the functions $F_i=\partial_{\beta_i}\,W_m$
under permutations there is no need of introducing alternative sequences of the form \eqref{classs}  based on systems of equations   corresponding to the remaining coordinates $\beta_j$ for $j\neq m$.

The next result states that the varieties $ \mathcal{M}_{m,q}^{\mbox{sing}}$ of the dcKdV$_m$  hierarchy are closely related to the $(2\,q+1)$-reduced solutions of the dcKdV$_{m+2\,q}$ hierarchy.

\vspace{0.3cm}

Notice that given $2\leq r\leq m$, the hodograph equations for $r$-reduced solutions
\[
\beta_{m-r+1}=\beta_{m-r+2}=\ldots=\beta_m,
\]
of the dcKdV$_m$  hierarchy reduce to the  system
\[
F_i(\bt,\bb)=0,\quad i=1,\ldots,m-r+1,
\]
of $m-r+1$ equations for the $m-r+1$ unknowns $(\beta_1,\ldots,\beta_{m-r+1})$. Now we prove

\begin{teh}
If $(\bt,\bb)\in\mathcal{M}_{m,q}^{\mbox{sing}}$ where $\bt=(t_0,t_1,\ldots)$ and $\bb=(\beta_1,\ldots,\beta_m)$, then if we define
\[
\bt^{(m+2\,q)}:=(t_q,t_{q+1},\ldots),\quad
\bb^{(m+2\,q)}:=(\beta_1,\ldots,\beta_m,\overbrace{\beta_m,\ldots,\beta_m}^{2\,q}),
\]
it follows that $
(\bt^{(m+2\,q)},\bb^{(m+2\,q)})
$ is a $(2\,q+1)$-reduced solution of the hodograph equations for the dcKdV$_{m+2\,q}$ hierarchy.
\end{teh}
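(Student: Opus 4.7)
The plan is to verify the $m$ hodograph equations of the dcKdV$_{m+2q}$ hierarchy (after the $(2q+1)$-reduction) by direct contour-integral manipulations based on the representation \eqref{w}. The key preliminary identity is
\[
\frac{\partial^k}{\partial\beta_m^k}\frac{1}{\sqrt{\prod_{j=1}^m(1-\beta_j/\lambda)}}=\frac{(2k-1)!!}{2^k}\cdot\frac{1}{(\lambda-\beta_m)^k}\cdot\frac{1}{\sqrt{\prod_{j=1}^m(1-\beta_j/\lambda)}},
\]
obtained by iterating $\partial_{\beta_m}(1-\beta_m/\lambda)^{-1/2}=(2\lambda)^{-1}(1-\beta_m/\lambda)^{-3/2}$. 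Combined with \eqref{simm}, this shows that the defining conditions \eqref{classs} of $\mathcal{M}_{m,q}^{\mbox{sing}}$, together with the ambient hodograph equations \eqref{s0}, are equivalent to the residue vanishing statements
\[
\oint_{\gamma}\frac{\d\lambda}{2\,i\,\pi}\,\frac{U(\lambda,\bt)}{(\lambda-\beta_m)^k\sqrt{\prod_{j=1}^m(1-\beta_j/\lambda)}}=0,\quad k=1,\ldots,q+1,
\]
plus the analogous vanishing at $\lambda=\beta_i$ for $i=1,\ldots,m-1$.

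Next I would rewrite the enlarged $W_{m+2q}(\bt^{(m+2q)},\bb^{(m+2q)})$ at the reduced point. Because the last $2q+1$ Riemann invariants all equal $\beta_m$,
\[
\sqrt{\prod_{i=1}^{m+2q}(1-\tilde\beta_i/\lambda)}\,\Big|_{\mbox{red}}=\Big(1-\frac{\beta_m}{\lambda}\Big)^q\sqrt{\prod_{j=1}^m(1-\beta_j/\lambda)},
\]
while the time shift produces $\sum_{n\geq 0}t_{n+q}\lambda^n=\lambda^{-q}(U(\lambda,\bt)-\pi(\lambda))$ with $\pi(\lambda):=\sum_{n<q}t_n\lambda^n$ a polynomial of degree at most $q-1$. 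The permutation covariance \eqref{h1aa} applied to $W_{m+2q}$ implies that at the reduced point the partial derivatives with respect to the $2q+1$ coincident coordinates all coincide, so only $m$ distinct equations have to be checked.

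For $i=1,\ldots,m-1$ the equation $\partial_{\tilde\beta_i}W_{m+2q}|_{\mbox{red}}=0$ takes the form
\[
\oint_{\gamma}\frac{\d\lambda}{2\,i\,\pi}\,\frac{U(\lambda,\bt)-\pi(\lambda)}{(\lambda-\beta_i)(\lambda-\beta_m)^q\sqrt{\prod_j(1-\beta_j/\lambda)}}=0.
\]
The contribution of $\pi$ vanishes because its integrand decays faster than $1/\lambda$ at infinity (the bound $\deg\pi<q$ is exactly what is needed), and a partial-fraction decomposition $((\lambda-\beta_i)(\lambda-\beta_m)^q)^{-1}=A/(\lambda-\beta_i)+\sum_{k=1}^{q}B_k/(\lambda-\beta_m)^k$ converts the remaining integral into a linear combination of $\partial_{\beta_i}W_m$ and of $\partial_{\beta_m}^k W_m$ for $1\leq k\leq q$, all vanishing by the previous step. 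The single remaining equation, associated with the coincident block, is reduced in the same way to a nonzero multiple of $\partial_{\beta_m}^{q+1}W_m$, which vanishes by the last condition built into \eqref{classs}.

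The main subtlety is more bookkeeping than substance: one has to simultaneously account for the time shift (which produces the polynomial tail $\pi$) and for the collision of $2q+1$ roots (which raises the order of the pole at $\beta_m$). The off-diagonal equations yield poles at $\beta_m$ of orders $1,\ldots,q$, while the diagonal equation produces a pole of order exactly $q+1$; together these match precisely the $q+1$ diagonal conditions imposed in the definition of $\mathcal{M}_{m,q}^{\mbox{sing}}$, which is why the shift must be by exactly $q$ in the time variables and by exactly $2q$ in the size of the hierarchy.
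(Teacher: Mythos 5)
Your proof is correct and follows essentially the same route as the paper's: both reformulate the defining conditions of $\mathcal{M}_{m,q}^{\mbox{sing}}$ as vanishing contour integrals with poles of order up to $q+1$ at $\beta_m$, both use the factorization $R^{(m+2q)}=\lambda^q(\lambda-\beta_m)^{-q}R^{(m)}$ together with the time shift $t_n\mapsto t_{n+q}$, and both reduce the enlarged hodograph equations to linear combinations of the vanishing quantities. The only real difference is the last step, where the paper solves a triangular homogeneous linear system in the auxiliary integrals $I_{i,k}$ while you use a partial-fraction decomposition (valid since the point is unreduced, so $\beta_i\neq\beta_m$) to express each target integral directly as a combination of the hypotheses; you also make explicit the vanishing of the contribution of the degree-$(q-1)$ polynomial tail of $U^{(m)}$, which the paper leaves implicit.
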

\begin{proof}
To proof this statement we will use superscripts $(m)$ and $(m+2\,q)$ to distinguish objects corresponding to different  hierarchies. By assumption we have that
\[
(\bt^{(m)},\bb^{(m)})\in\mathcal{M}_{m,q}^{\mbox{sing}}.
\]
Thus, taking \eqref{idd} into account, we have that \eqref{classs} can be rewritten as
\begin{equation}\label{idd1}\everymath{\displaystyle}
\begin{cases}
F_i^{(m)}(\bt^{(m)},\bb^{(m)}):=\oint_{\gamma}\dfrac{\d \lambda}{2\,i\,\pi}\,\dfrac{U^{(m)}(\lambda,\bt^{(m)})\,
R^{(m)}(\lambda,\bb^{(m)})}{\lambda-\beta_i^{(m)}}=0,\quad i=1,\ldots,m \\\\
F_{m,j}^{(m)}(\bt^{(m)},\bb^{(m)}):=\oint_{\gamma}\dfrac{\d \lambda}{2\,i\,\pi}\,\dfrac{U^{(m)}(\lambda,\bt^{(m)})\,
R^{(m)}(\lambda,\bb^{(m)})}{(\lambda-\beta_m^{(m)})^j}=0,\quad j=2,\ldots,q+1.
\end{cases}
\end{equation}
Now a $(2\,q+1)$-reduced solution  of the hodograph equations for  the dcKdV$_{m+2\,q}$ is characterized by
\begin{equation}
F_i^{(m+2\,q)}(\bt^{(m+2\,q)},\bb^{(m+2\,q)}):=\oint_{\gamma}\dfrac{\d \lambda}{2\,i\,\pi}\,\dfrac{U^{(m+2\,q)}(\lambda,\bt^{(m+2\,q)})\,
R^{(m+2\,q)}(\lambda,\bb^{(m+2\,q)})}{\lambda-\beta_i^{(m+2\,q)}}=0,\end{equation}
where $i=1,\ldots,m
$.
But it is clear that
\begin{equation}\label{idd3}
R^{(m+2\,q)}(\lambda,\bb^{(m+2\,q)})=\dfrac{\lambda^q}{(\lambda-\beta_m^{(m)})^q}\,R^{(m)}(\lambda,\bb^{(m)})
\end{equation}
Hence if we set
\[
t^{(m+2\,q)}_i:=t^{(m)}_{i+q},\quad i\geq 0,
\]
we have
\begin{equation}\label{idd4}
U^{(m)}(\lambda,\bt^{(m)})=x^{(m)}+\lambda\,t_1^{(m)}+\cdots+
\lambda^{q-1}\,t_{q-1}^{(m)}+\lambda^q\,
U^{(m+2\,q)}(\lambda,\bt^{(m+2\,q)}).
\end{equation}
Then it follows that
\begin{equation}\label{idd4a}
F_i^{(m+2\,q)}(\bt^{(m+2\,q)},\bb^{(m+2\,q)})=
\oint_{\gamma}\dfrac{\d \lambda}{2\,i\,\pi}\,\dfrac{U^{(m)}(\lambda,\bt^{(m)})\,
R^{(m)}(\lambda,\bb^{(m)})}{(\lambda-\beta_i^{(m)})(\lambda-\beta_m^{(m)})^q}, \quad i=1,\ldots,m.
\end{equation}

Furthermore, for any given $i=1,\ldots,m$  we have
\begin{align*}
& F_{i}^{(m)}(\bt^{(m)},\bb^{(m)})=\oint_{\gamma}\dfrac{\d \lambda}{2\,i\,\pi}\,\dfrac{U^{(m)}(\lambda,\bt^{(m)})\,
R^{(m)}(\lambda,\bb^{(m)})}{\lambda-\beta_i^{(m)}}\\\\
&=\oint_{\gamma}\dfrac{\d \lambda}{2\,i\,\pi}\,
\dfrac{(\lambda-\beta_m^{(m)})^{q}\,U^{(m)}(\lambda,\bt^{(m)})\,
R^{(m)}(\lambda,\bb^{(m)})}{(\lambda-\beta_i^{(m)})\,(\lambda-\beta_m^{(m)})^q}\\\\
&=\sum_{k=0}^{q}\,c_{1,k}(\bb^{(m)})\,I_{i,k}(\bt^{(m)},\bb^{(m)}) ,
\end{align*}
and
\begin{align*}
& F_{m,j}^{(m)}(\bt^{(m)},\bb^{(m)})=\oint_{\gamma}\dfrac{\d \lambda}{2\,i\,\pi}\,\dfrac{U^{(m)}(\lambda,\bt^{(m)})\,
R^{(m)}(\lambda,\bb^{(m)})}{(\lambda-\beta_m^{(m)})^j}\\\\
&=\oint_{\gamma}\dfrac{\d \lambda}{2\,i\,\pi}\,
\dfrac{(\lambda-\beta_i^{(m)})\,(\lambda-\beta_m^{(m)})^{q-j}\,U^{(m)}(\lambda,\bt^{(m)})\,
R^{(m)}(\lambda,\bb^{(m)})}{(\lambda-\beta_i^{(m)})\,(\lambda-\beta_m^{(m)})^q}\\\\
&=\sum_{k=0}^{q-j+1}\,c_{j,k}(\bb^{(m)})\,I_{i,k}(\bt^{(m)},\bb^{(m)})
,\quad j=2,\ldots, q+1.
\end{align*}
where the functions $c_{j,k}(\bb^{(m)})$ are the coefficients of the polynomials
\begin{equation}\label{pol}
\begin{cases}
(\lambda-\beta_m^{(m)})^q=\sum_{k=0}^q c_{1k}(\bb^{(m)})\,\lambda^k;\\\\
(\lambda-\beta_i^{(m)})\,(\lambda-\beta_m^{(m)})^{q-j}=\sum_{k=0}^{q-j+1} c_{jk}(\bb^{(m)})\,\lambda^k,\quad j=2,\ldots,q+1.
\end{cases}
\end{equation}
and
\begin{equation}\label{int}
I_{i,k}(\bt^{(m)},\bb^{(m)}):=\oint_{\gamma}\dfrac{\d \lambda}{2\,i\,\pi}\,
\dfrac{\lambda^k\,U^{(m)}(\lambda,\bt^{(m)})\,
R^{(m)}(\lambda,\bb^{(m)})}{(\lambda-\beta_i^{(m)})\,(\lambda-\beta_m^{(m)})^q}
\end{equation}
Now, for any given $i=1,\ldots,m$  the system \eqref{classs} implies
\[
\begin{cases}
F_{i}^{(m)}(\bt^{(m)},\bb^{(m)})=0,\\\\
F_{m,j}^{(m)}(\bt^{(m)},\bb^{(m)})=0,\quad j=2,\ldots, q+1,
\end{cases}
\]
and, as a consequence, we deduce the following system of
$q$ homogeneous linear equations for the $q$ functions $I_{i,k}(\bt^{(m)},\bb^{(m)})$
\[
\sum_{k=0}^{q-j+1}\,c_{j,k}(\bb^{(m)})\,I_{i,k}(\bt^{(m)},\bb^{(m)})=0,
\quad j=1,\ldots, q+1.
\]
Because of the linear independence of the polynomials \eqref{pol} these equations are linearly independent and, therefore, all
the functions $I_{i,k}(\bt^{(m)},\bb^{(m)})$ vanish. Finally, from \eqref{idd4a}
we conclude that $I_{i,0}(\bt^{(m)},\bb^{(m)})=0$ is equivalent to $F_i^{(m+2\,q)}(\bt^{(m+2\,q)},\bb^{(m+2\,q)})=0$ and the statement follows.
\end{proof}

\section{Examples}

\subsubsection*{ dcKdV$_1$ hierarchy }

The hodograph equation  for the dcKdV$_1$ hierarchy with
$t_n=0$ for all $n\geq3$ reduce to
\begin{equation}\label{h00}
8\,x+12\,t_1\,\beta_1+15\,t_2\,\beta_1^2=0.
\end{equation}
The singular variety $\mathcal{M}_{1,1}^{\mbox{sing}}$ for
\eqref{h00} is determined by adding to \eqref{h00} the equation
\begin{equation}\label{ex02}
2\,t_1+5\,t_2\,\beta_1=0,
\end{equation}
so that  for $t_2\neq  0$ we have $\beta_1=-\,\dfrac{2\,t_1}{5\,t_2}$.
Substituting this result in  \eqref{h00} we find a constraint for
the flow parameters
\[
x=\dfrac{3}{10}\,\dfrac{t_1^2}{t_2},
\]
which is  the shock region for the solution of \eqref{h00} given by
\begin{equation}\label{sol0}
\beta_1=\dfrac{2}{15\,t_2}\,\Big(-3\,t_1+\sqrt{3\,(3\,t_1^2-10\,t_2\,x)}\,\Big).
\end{equation}

There are two sectors $\mathcal{M}_{1,1,k}^{\mbox{sing}} \,(k=1,2)$
in $\mathcal{M}_{1,1}^{\mbox{sing}}$
\begin{align}\label{ex03}
\nonumber &\mathcal{M}_{1,1,1}^{\mbox{sing}}:\quad x=t_1=t_2=0,\quad
\mbox{$\beta_1$ arbitrary};\\\\
\nonumber & \mathcal{M}_{1,1,2}^{\mbox{sing}}:\quad
\mbox{$(x,t_1,t_2,\beta_1)$ such that $t_2\neq 0$,\,
$x=\dfrac{3}{10}\,\dfrac{t_1^2}{t_2}$ and
$\beta_1=-\dfrac{2}{5}\,\dfrac{t_1}{t_2}$}
\end{align}

To see the relationship with the dcKdV$_3$  hierarchy we  notice that
\[
x^{(3)}=t_1,\quad t_1^{(3)}=t_2,
\]
and
\[
\bb^{(3}=(\beta_1,\beta_1,\beta_1)=-\dfrac{2}{5}\,\dfrac{x^{(3)}}{t_1^{(3)}}\,(1,1,1),
\]
which is a $3$-reduced solution of the first flow \eqref{f3} of the
dcKdV$_3$ hierarchy.

\vspace{0.4cm}

The dcKdV$_1$ hodograph equation  with $t_n\,=\,0$ for all $n\,\geq\,6$ is

$$693\,t_5\,\beta_1^5\,+\,630\,t_4\,\beta_1^4\,+\,560\,t_3\,\beta_1^3\,+\,480\,t_2\,\beta_1^2\,+\,384\,t_1\,\beta_1\,+256\,x\,=\,0.$$
Let us first consider the singular variety ${\cal M}^{\mbox{sing}}_{1,1}$ with $t_n\,=\,0$ for all $n\,\geq\,4$. It is
is determined by the equations
$$\everymath{\displaystyle}\begin{array}{l}
560\,t_3\,\beta_1^3\,+\,480\,t_2\,\beta_1^2\,+\,384\,t_1\,\beta_1\,+256\,x\,=\,0,\\  \\
1680\,t_3\,\beta_1^2\,+\,960\,t_2\,\beta_1\,+\,384\,t_1\,=\,0.
\end{array}$$
Thus an open subset of ${\cal M}^{\mbox{sing}}_{1,1}$ can be parametrized by the equations
$$\everymath{\displaystyle}\begin{array}{l}
x\,=\,\frac{-\,25\,t_2^3\,+\,105\,t_1\,t_2\,t_3\,+\,\sqrt{5}\,\sqrt{125\,t_2^6\,-\,1050\,t_1\,t_3\,t_2^4\,+\,2940\,t_1^2\,t_2^2\,t_3^2\,
-\,2744\,t_1^3\,t_3^3}}{245\,t_3^2},\\  \\
\beta_1\,=\,-\,\frac{2\,\left(-\,25\,t_2^3\,+\,70\,t_1\,t_2\,t_3\,+\,\sqrt{5}\,
   \sqrt{\left(5\,t_2^2\,-\,14\,t_1\,t_3\right)^3}\right)}{35\,t_3\,\left(14\,t_1\,t_3\,-\,5\,t_2^2\right)}.
\end{array}$$
It determines the following 3-reduced solution of the two first flows of  the dcKdV$_3$  hierachy
($x^{(3)}\,=\,t_1$, $t_1^{(3)}\,=\,t_2$, $t_2^{(3)}\,=\,t_3$)
$$\beta_1^{(3)}=\beta_2^{(3)}=\beta_3^{(3)}\,=\,-\frac{2 \left(-25\,(t_1^{(3)})^3\,+\,70\,x^{(3)}\,t_1^{(3)}\,t_2^{(3)}\,+\,\sqrt{5}\,\sqrt{\left(5\,(t_1^{(3)})^2\,
-\,14\,x^{(3)}\,t_2^{(3)}\right)^3}\right)}{35\,t_2^{(3)}\,\left(14\,x^{(3)}\,t_2^{(3)}\,-\,5\,(t_1^{(3)})^2\right)}.$$

Next, for the sector ${\cal M}^{\mbox{sing}}_{1,2}$ if we set $t_n\,=\,0$ for all $n\,\geq\,5$,  we obtain the equations
$$\everymath{\displaystyle}\begin{array}{l}
630\,t_4\,\beta_1^4\,+\,560\,t_3\,\beta_1^3\,+\,480\,t_2\,\beta_1^2\,+\,384\,t_1\,\beta_1\,+256\,x\,=\,0,\\  \\
2520\,t_4\,\beta_1^3\,+\,1680\,t_3\,\beta_1^2\,+\,960\,t_2\,\beta_1\,+\,384\,t_1\,=\,0,\\  \\
7560\,t_4\,\beta_1^2\,+\,3360\,t_3\,\beta_1\,+\,960\,t_2\,=\,0.
\end{array}$$
From these equations we find
$$\everymath{\displaystyle}\begin{array}{l}
t_1\,=\,\frac{5\,\left(-\,49\,t_3^3\,+\,189\,t_2\,t_3\,t_4\,+\,\sqrt{7}\,\sqrt{343\,t_3^6\,-\,2646\,t_2\,t_4\,t_3^4\,+\,
6804\,t_2^2\,t_3^2\,t_4^2\,-\,5832\,t_2^3\,t_4^3}\right)}{1701\,t_4^2},\\  \\
x\,=\,\frac{5\,\left(-\,98\,t_3^4\,+\,378\,t_2\,t_4\,t_3^2\,+\,2\,\sqrt{7}\,\sqrt{\left(7\,t_3^2\,-\,18\,t_2\,t_4\right)^3}
\,t_3\,-\,243\,t_2^2\,t_4^2\right)}{10206\,t_4^3},\\  \\
\beta_1\,=\,-\,\frac{2\,\left(-\,49\,t_3^3\,+\,126\,t_2\,t_3\,t_4\,+\,\sqrt{7}\,\sqrt{\left(7\,t_3^2\,-\,18\,t_2\,t_4\right)^3}
\right)}{63\,t_4\,\left(18\,t_2\,t_4\,-\,7\,t_3^2\right)}.
\end{array}$$
Then the associated 5-reduced solution of the two first flows of the dcKdV$_5$  hierarchy
($x^{(5)}\,=\,t_2$, $t_1^{(5)}\,=\,t_3$, $t_2^{(5)}\,=\,t_4)$ is given by
$$\beta_i\,=\,-\frac{2\,\left(-\,49\,(t_1^{(5)})^3\,+\,126\,x^{(5)}\,t_1^{(5)}\,t_2^{(5)}\,+\,\sqrt{7}\,
\sqrt{\left(7\,(t_1^{(5)})^2\,-\,18\,x^{(5)}\,t_2^{(5)}\right)^3}\right)}{63\,t_2^{(5)}\,\left(18\,x^{(5)}\,t_2^{(5)}\,-\,
7\,(t_1^{(5)})^2\right)},\quad i\,=1,\dots,5.$$

\subsubsection*{ dcKdV$_2$ hierarchy }

Let us consider the hodograph equations for the  dcKdV$_2$
hierarchy with $t_n=0$ for all $n\geq3$. From \eqref{h2} we have that they take the form
\begin{equation}\label{h22}
\everymath{\displaystyle}
\begin{cases}
8x+4t_1(3\beta_1+\beta_2)+3t_2\left(5\beta_1^2+
  2\beta_1\beta_2+\beta_2^2\right)=0,\\\\
8x+4t_1(\beta_1+3\beta_2)+3t_2\left(\beta_1^2+
  2\beta_1\beta_2+5\beta_2^2\right)
  =0.
\end{cases}
\end{equation}
The singular variety $\mathcal{M}_2^{\mbox{sing}}$ is determined by \eqref{h22} together with the additional
condition \newline  ($\mbox{det} (\partial_{\beta_i\beta_j} W_m(\bt,\bb))=0$)
\begin{equation}\label{add}
-(2t_1+3t_2(\beta_1+\beta_2))^2+9(2t_1+t_2(5\beta_1+\beta_2))(2t_1+t_2(\beta_1+5\beta_2))=0.
\end{equation}
There elements  of  $\mathcal{M}_{2}^{\mbox{sing}} $ are
\begin{align}\label{ex3}
\nonumber &x=t_1=t_2=0,\quad
\mbox{$(\beta_0,\beta_1)$ arbitrary};\\\\
\nonumber &
\mbox{$(x,t_1,t_2,\beta_1,\beta_2)$ such that $t_2\neq 0$,\,
$x=\dfrac{t_1^2}{3\,t_2}$ and $\beta_1=\beta_2=-\dfrac{t_1}{3\,t_2}$}
\end{align}
The subvarieties ${\cal M}^{\mbox{sing}}_{2,q}$ are all equal and given by
\[
x=t_1=t_2=0,\quad \mbox{$(\beta_0,\beta_1)$ arbitrary with $\beta_0\neq \beta_1$}.
\]

Notice that the constraint $x=\dfrac{t_1^2}{3\,t_2}$ determines  the shock region for the following solution of \eqref{h22}
\begin{equation}\label{h2a}
\beta_1\,=\,\frac{-\,t_1\,+\,\sqrt{2}\,\sqrt{t_1^2\,-\,3\,t_2\,x}}{3\,t_2},\quad
\beta_2\,=\,\frac{-\,t_1\,-\,\sqrt{2}\,\sqrt{t_1^2\,-\,3\,t_2\,x}}{3\,t_2}.
\end{equation}

Let us now consider the system of hodograph equations \eqref{h2} for
the  dcKdV$_2$ hierarchy with $t_n=0$ for all $n\geq4$. The singular
variety $\mathcal{M}_2^{\mbox{sing}}$ is now determined by
\eqref{h2} and the condition ($\mbox{det} (\partial_{\beta_i\beta_j}
W_m(\bt,\bb))=0$)
$$\everymath{\displaystyle}\begin{array}{l}
32 t_1^2+96 t_2 (\beta_1+\beta_2) t_1+702 t_3^2 \beta_1^2 \beta_2^2+72
   \left(3 t_2^2+t_1 t_3\right) \beta_1
   \beta_2+12 \left(3 t_2^2+13 t_1 t_3\right)
   \left(\beta_1^2+\beta_2^2\right)+\\  \\
   \quad 486 t_2 t_3   \left(\beta_2 \beta_1^2+\beta_2^2 \beta_1\right)+90 t_2 t_3 \left(\beta_1^3+
   \beta_2^3\right)+180 t_3^2 \left(\beta_2 \beta_1^3+\beta_2^3 \beta_1\right)+45 t_3^2
   \left(\beta_1^4+\beta_2^4\right)=0.
\end{array}$$
One finds the following six sectors in $\mathcal{M}_2^{\mbox{sing}}$
\begin{align*}
\everymath{\displaystyle}
\textbf{1.}\quad  & x\,=\,\frac{-9 t_2^3+36 t_1 t_3
t_2+(8t_1t_3-3t_2^2)\sqrt{9t_2^2-24 t_1
   t_3}}{72t_3^2},\qquad
   \beta_1\,=\,\beta_2\,=\,-\frac{3 t_2+\sqrt{9 t_2^2-24
   t_1 t_3}}{12 t_3},\\  \\  \\
\textbf{2.}\quad  & x\,=\,\frac{-9 t_2^3+36 t_1 t_3
t_2-(8t_1t_3-3t_2^2)\sqrt{9t_2^2-24 t_1
   t_3}}{72t_3^2},\qquad
   \beta_1\,=\,\beta_2\,=\,\frac{-3 t_2+\sqrt{9 t_2^2-24
   t_1 t_3}}{12 t_3},\\ \\ \\
\textbf{3.} &\quad x\,=\,\frac{-45t_3 t_2^3+180 t_1 t_3^2
   t_2+\sqrt{15}(8t_1t_3-3t_2^2)\sqrt{t_3^2 \left(3 t_2^2-8
   t_1 t_3\right)}}{360 t_3^3},\\  \\
 &\beta_1\,=\,-\frac{5 t_2 t_3+\sqrt{15} \sqrt{t_3^2 \left(3
   t_2^2-8 t_1 t_3\right)}}{20t_3^2},\qquad \beta_2\,=\,\frac{-3 t_2 t_3+\sqrt{15} \sqrt{t_3^2 \left(3 t_2^2-8 t_1
   t_3\right)}}{12 t_3^2},\\  \\  \\
\textbf{4.} &\quad x\,=\,\frac{-45t_3 t_2^3+180 t_1 t_3^2
   t_2-\sqrt{15}(8t_1t_3-3t_2^2)\sqrt{t_3^2 \left(3 t_2^2-8
   t_1 t_3\right)}}{360 t_3^3},\\  \\
&\beta_1\,=\,-\frac{3 t_2 t_3+\sqrt{15} \sqrt{t_3^2 \left(3 t_2^2-8 t_1
   t_3\right)}}{12 t_3^2},\qquad \beta_2\,=\,\frac{-5 t_2 t_3+\sqrt{15} \sqrt{t_3^2 \left(3
   t_2^2-8 t_1 t_3\right)}}{20t_3^2},
\end{align*}
\begin{align*}
\everymath{\displaystyle}
\textbf{5.} &\quad x\,=\,\frac{-45t_3 t_2^3+180 t_1 t_3^2
   t_2-\sqrt{15}(8t_1t_3-3t_2^2)\sqrt{t_3^2 \left(3 t_2^2-8
   t_1 t_3\right)}}{360 t_3^3},\\  \\
&\beta_1\,=\,\frac{-5 t_2 t_3+\sqrt{15} \sqrt{t_3^2 \left(3
   t_2^2-8 t_1 t_3\right)}}{20t_3^2},           \,\qquad \beta_2
\,=\,-\frac{3 t_2 t_3+\sqrt{15} \sqrt{t_3^2 \left(3 t_2^2-8
t_1   t_3\right)}}{12 t_3^2}
\\  \\
\textbf{6.} &\quad x\,=\,\frac{-45t_3 t_2^3+180 t_1 t_3^2
   t_2+\sqrt{15}(8t_1t_3-3t_2^2)\sqrt{t_3^2 \left(3 t_2^2-8
   t_1 t_3\right)}}{360 t_3^3},\\  \\
   &\beta_1\,=\,\frac{-3 t_2 t_3+\sqrt{15} \sqrt{t_3^2 \left(3 t_2^2-8 t_1
   t_3\right)}}{12 t_3^2},\qquad \beta_2\,=\,-\frac{5 t_2 t_3+\sqrt{15} \sqrt{t_3^2 \left(3
   t_2^2-8 t_1 t_3\right)}}{20t_3^2}.
\end{align*}

It is easy to see that $\mathcal{M}_{2,1}^{\mbox{sing}}$ is given by
the sectors 5 and 6. To check the connection between these sectors
and  the dcKdV$_4$ hierarchy it is enough to set
\[
x^{(4)}=t_1,\quad t_1^{(4)}=t_2,\quad t_2^{(4)}=t_3,\quad
\bb^{(4)}=(\beta_1,\beta_2,\beta_2,\beta_2),
\]
and it is immediate to prove that $\bb^{(4)}(\bt^{(4)})$ verifies
the equations of the first flow of the dcKdV$_4$ hierarchy
\[
\dfrac{\partial \,\beta_i}{\partial t_1^{(4)}}=\Big(\beta_i+\dfrac{1}{2}\,\sum_{k=1}
^4 \beta_k\Big)\,\dfrac{\partial \,\beta_i}{\partial x^{(4)}},\quad i=1,\ldots,4.
\]

\vspace{0.5cm}

\subsection* { Acknowledgements}

\vspace{0.3cm} The authors  wish to thank the  Spanish Ministerio de
Educaci\'on y Ciencia (research project FIS2008-00200/FIS) for its
finantial support. B. K. is thankful to the Departamento de F\'{i}sica Te\'orica II for the kind hospitality.


\begin{thebibliography}{99}

\bibitem{kod} Y. Kodama and B.G. Konopelchenko, J. Phys. A: Math.
Gen. {\bf 35}, L489-L500 (2002)

\bibitem{bor} B.G. Konopelchenko and L. Mart\'{\i}nez Alonso, J. Phys. A: Math.
Gen. {\bf 37}, 7859 (2004)



\bibitem{whi} G. B. Whitham, \emph{Linear and nonlinear waves}, Wiley-Interscience, New York (1976)

\bibitem{dubn} B. A. Dubrovin and S. P. Novikov,
\emph{Hydrodynamics of weakly deformed soliton lattices. Differential geometry and Hamiltonian theory}, Russian Math. Surveys {\bf 44}, 35 (1989)

\bibitem{zak} V. E. Zakharov, Func. Anal. Appl. {\bf 14}, 89 (1980)

\bibitem{lma} L. Martinez Alonso, J. Math. Phys. {\bf 21}, 2342 (1980)

\bibitem{for1} M. Antonowicz and A. P. Fordy, Phys. D {\bf 28}, no. 3, 345 (1987)

\bibitem{for2} M. Antonowicz and A. P. Fordy, J. Phys. A  : Math. Gen. {\bf 21}, L269 (1988)

\bibitem{dub1} B. Dubrovin, S. Liu and Y. Zhang,
Commun. Pure. Appl. Math. {\bf 59} 559 (2006)

\bibitem{dub2} B. Dubrovin, S. Liu and Y. Zhang,
Commun. Math. Phys. {\bf 267} 117 (2006)



\bibitem{man} M. Ma\~{n}as, L. Mart\'{\i}nez Alonso and E. Medina, J. Phys. A : Math. Gen.
{\bf 30}, 4815 (1997)

\bibitem{borl} B. Konopelchenko and L. Mart\'{\i}nez Alonso and E. Medina, J. Phys. A : Math. Gen.
{\bf 32}, 3621 (1997)

\bibitem{fer} E. V. Ferapontov and M. V. Pavlov, Physica D {\bf 52},
211 (1991)



\bibitem{kri} I. M. Krichever, Commun. Pure. Appl. Math. {\bf 47}  437 (1994)


\bibitem{dar} G. Darboux, \emph{Lecons sur la theorie general des surfaces II }, Gauthier Villars (1915)

\bibitem{kud} V. R. Kudashev and S. E. Sharapov, Phys. Lett. A {\bf
154},445 (1991); Theor. Math. Phys. {\bf 87}, 40 (1991)

\bibitem{tia1}  F. R. Tian, Commun. Pure. Appl. Math. {\bf 46} 1093 (1993)

\bibitem{tia2}  F. R.  Tian, Duke Math. J. {\bf 74}  203 (1994)

\bibitem{gra}  T. Grava, Commun. Pure. Appl. Math. {\bf 55} 395 (2002)

\bibitem{pav} M. V. Pavlov, J. Math. Phys. {\bf 44} 4134 (2003)

\bibitem{pav1} M. V. Pavlov,  \emph{Hamiltonian formulation of electroforesis equations. Integrable hydrodynamic equations } Preprint, Landau Inst. Theor. Phys., Chernogolovsca (1987)

\bibitem{dubk} B. Dubrovin, T. Grava and C. Klein, J. Nonlinear Science {\bf 19} 57 (2009)


\bibitem{arn1} V. I. Arnold,Func. Anal. Appl. {\bf 6} no.4, 3 (1972) ; Russian Math. Surveys {\bf 29} no. 2,
10 (1974); Russian Math. Surveys {\bf 30} no. 5, 3 (1975)


\bibitem{arn2} V. I. Arnold, S. M. Gusein-Zade and A. N. Varchenko, {\emph Singularities of differentiable maps}
Birkh\"auser Boston, Inc. (1985)

\bibitem{next} B. Konopelchenko, L. Mart\'{i}nez Alonso and E. Medina, \emph{Singular sectors in the hodograph
 transform for the 1-layer Beney systems and it associated hierarchy}. In preparation


\end{thebibliography}
\end{document}